\newcommand\nn{^{(n)}}
\newtheorem{theorem}{Theorem}
\newcommand\comment[1]{}
\renewcommand{\c}{u}
\newcommand\cc{u}
\renewcommand\color[1]{}
\renewcommand\marginpar[1]{}
\title{Some remarks on a viscous regularization\\ of the nonlinear diffusion equation}
\author{Giuseppe Tomassetti\footnote{Universita` di Roma Tor Vergata, Dipartimento di Ingegneria Civile ed Ingegneria Informatica. Via Politecnico 1, 00133 Roma, Italy. Email: \texttt{tomassetti@ing.uniroma2.it}}}
\begin{document}
\maketitle
\begin{abstract}
We illustrate an alternative derivation of the viscous regularization of the diffusion equation which was studied in [A.~Novick-Cohen and R.~L. Pego. {\em Trans. Amer. Math. Soc.}, 324:331--351]. We provide an alternative proof of existence of solutions, based on the Galerkin method and on compactness arguments. In addition, we propose a ``non-smooth'' variant of the viscous regularization which we believe may result in interesting hysteretic effects.
\end{abstract}

{\bf Keywords:} diffusion, backward-parabolic partial differential equations, viscosity. 

\section{Introduction}
The \emph{nonlinear diffusion equation}:
\begin{align}
&\dot {\c}=\Delta(f({\c}))\label{diffus}
\end{align}
is the standard mathematical model for species diffusion in a continuous medium. The derivation of \eqref{diffus} begins with the \emph{mass--balance equation}:
  \begin{equation}
    \label{eq:1}
    \dot {\c}+{\rm div}\mathbf h=0,
  \end{equation}
which relates the time derivative of \emph{species concentration} $u$ to the divergence of the \emph{flux of diffusant} $\mathbf h$. Then, according to  \emph{Fick's law}:
\begin{equation}
    \label{eq:2}
    \mathbf h=-\alpha\nabla\mu,
  \end{equation} 
the flux of diffusant is deemed proportional to the \emph{chemical-potential gradient} $\nabla\mu$ through a constant, positive \emph{mobility} $\alpha$. The combination of \eqref{eq:1} and \eqref{eq:2} yields
\begin{equation}\label{eq:109}
  \dot u=\alpha\Delta\mu,
\end{equation}
and, for $\widehat\psi'(\cdot)$ the derivative of the \emph{coarse-grain free-energy mapping} $\widehat\psi(\cdot)$, the \emph{equation of state}:
\begin{equation}\label{eq:5}
  \mu=\widehat\psi'({\c})
\end{equation}
is enforced, so that \eqref{diffus} is arrived at on setting:
\begin{equation}
  f(u)=\alpha\widehat\psi'(u).
\end{equation}

When modeling diffusion coupled with phase separation, one usually selects a non-convex coarse-grain free energy, the typical example being the double-well potential:
\begin{equation}
  \widehat\psi(u)=\kappa u^2(u-1)^2,
\end{equation}
with $\kappa$ a positive constant. In this case, \eqref{diffus} is ill posed, and a regularization of the equation of state is in order, the most popular choice being the \emph{elliptic regularization}:
\begin{equation}\label{eq:101}
  \mu=\widehat \psi'({\c})-\lambda\Delta{\c},
\end{equation}
with $\lambda$ a positive constant related to surface tension, leading to the celebrated \emph{Cahn-Hilliard equation} \cite{CahnH1958JCP}:
\begin{align}\label{eq:108}
  &\dot u=\alpha\Delta(\widehat\psi'(u)-\lambda\Delta\mu).
\end{align}

An alternative approach was explored in \cite{NovicP1991TAMS}, where \eqref{eq:101} was replaced with
\begin{equation}
  \label{eq:53}
  \mu=\widehat\psi'({\c})+\beta\dot{\c},
\end{equation}
with $\beta$ a positive constant, whence the following \emph{viscous regularization of the diffusion equation}:
\begin{equation}\label{eq:13}
\dot{\c}-\alpha\beta\Delta\dot{\c}=\alpha\Delta\psi'({\c}),
\end{equation}
which was shown in the same paper to be well posed when $\widehat\psi'(\cdot)$ is a locally-Lipshitz function fulfilling certain growth assumptions. 

Besides \cite{NovicP1991TAMS}, there is a substantial amount of mathematical literature devoted to the analysis of \eqref{eq:13} and of similar models. The case when $\widehat\psi'(\cdot)$ is decreasing for large values of the argument has been considered in \cite{victor1998sobolev}. Related models have been studied in  \cite{barenblatt1993degenerate} and \cite{bertsch2013pseudoparabolic}. 

The papers \cite{EvansP2004MMMAS} and  \cite{Plotn1994Passing} are concerned with the behavior of the solutions in the vanishing-viscosity limit $\beta\to 0$ and provide insight on the hysteretic properties of the so--called entropy solutions. Such vanishing-viscosity approach provides a selection criterion for solutions outside the standard Sobolev setting in the degenerate case $\beta=0$, as shown in \cite{porzio2013radon} and \cite{thanh2014sobolev}. Forward-backward parabolic equations leading to hysteresis have also been considered in \cite{visintin2002forward}.

System obtained by combining both elliptic and viscous regularization have been considered  in \cite{EllioS1996JDE} and \cite{elliott1996cahn}, whose vanishing-viscosity limit has been studied in \cite{thanh2014passage}. More sophisticated generalizations of the Cahn--Hilliard system that still incorporate a viscous constribution have been proposed and investigated in \cite{miranville1999model,miranville2000some,miranville1998dynamical}. 

Despite the impressive amount of analytical literature concerning \eqref{eq:101}, only a few references are available concerning its justification and interpretation. In \cite{Novic1988viscous} it was shown that \eqref{eq:13} may be recovered as a suitable limit of  the equations describing the motion of a mixture of two fluids. The possibility of giving a rational position to \eqref{eq:101} is also intimated in \cite{FriedG1999JSP}. A derivation based on microforce balance is also provided in \cite{miranville1999model}.

Here, following the point of view of \cite{GurtiFA2010}, we select a quite encompassing constitutive class (\emph{cf.} \eqref{eq:111}) and, within this class, we find the most general constitutive equations consistent with the second law of thermodynamics (\emph{cf.} \eqref{eq:30}), appropriate to the isothermal context. In particular, we are able to retrieve \eqref{eq:53} as a special case (\emph{cf.} \eqref{eq:19}). In addition we propose a proof of existence of weak solutions for a certain class of polynomial coarse-grain free energies. One of our motivations for presenting this proof is to illustrate how the viscous regularization affects the relevant estimates. 

The last section of this paper collects a handful of additional issues: we propose a non--smooth variant of \eqref{eq:53} (\emph{cf.} \eqref{eq:35} and \eqref{eq:36}), and we argue that this variant should produce interesting hysteresis effects; then, we cursorily examine singular free energies and we discuss the position of prescriptions like \eqref{eq:53} within the non-standard thermodynamical setting put forth in \cite{Podio2006RM}.

\section{Standard background}\label{sec:stdback}
The point of view promoted by Gurtin in \cite{Gurti1996PD}, and illustrated in the recent monograph \cite{GurtiFA2010}, is that chemical potential must be treated as a \emph{primitive field} representing the amount of energy carried by a unit amount of diffusant. In accordance with this notion, given any part $\mathcal P$ of the body where the diffusion process takes place, the quantity
\begin{equation}
  \label{eq:6}
  \mathcal T(\mathcal P)=-\int_{\partial\mathcal P}\mu\mathbf h\color{blue}\,{\rm d}\Gamma
\end{equation}
must be interpreted as \emph{the amount of chemical energy supplied to $\mathcal P$  per unit time}. In the absence of mechanical interactions, the appropriate version of the dissipation inequality is: 
\begin{equation}
  \label{eq:7}
  \int_{\mathcal P}\dot\psi\color{blue}\,{\rm d}x\color{black}\le \mathcal T(\mathcal P).
\end{equation}
By combining \eqref{eq:6} and \eqref{eq:7}, using the divergence theorem, and a localization argument, one obtains the inequality $\dot\psi+\mu\,\textrm{div}\mathbf h+\mathbf h\cdot\nabla\mu\le 0$; then with the help of the mass-balance equation \eqref{diffus}, one easily arrives at the \emph{local dissipation inequality}:
\begin{equation}
 \label{eq:9}
\dot\psi-\mu\dot{\c}+\mathbf h\cdot\nabla\mu\le 0.
\end{equation}
It is standard practice in continuum mechanics to exploit dissipation inequalities to
\begin{itemize}
\item single out those quantities that should be the object of constitutive specification: in the present case, the triplet $(\psi,\mu,\mathbf h)$;
\item determine what fields should appear as independent variables in these specifications: here, the triplet $({\c},\dot{\c},\mathbf g)$, with
  \begin{equation}
    \label{eq:14}
    \mathbf g=\nabla\mu.
  \end{equation}
\item discard, in the manner of Coleman and Noll \cite{ColemN1963ARMA}, ``unphysical'' constitutive choices, based on the requirement that the dissipation inequality  be never violated along any realizable process.
\end{itemize}
An example of application of this procedure may be found in the recent monograph \cite{GurtiFA2010}, whose Section 66 presents a fully-fledged constitutive theory for single-species transport coupled with elasticity. The starting point of the theory in question are certain  provisional constitutive equations which, when strain is neglected, take the form:
\begin{equation}
  \label{eq:11}
  \psi=\widehat\psi({\c}),\quad \mu=\widehat\mu({\c}),\quad \mathbf h=\widehat{\mathbf h}({\c},\mathbf g).
\end{equation}
It is shown in \cite{GurtiFA2010} that consistency  of \eqref{eq:11} with the dissipation inequality \eqref{eq:9} demands that the constitutive mappings delivering free energy and chemical potential be related by
\begin{equation}
  \label{eq:15}
\hat\mu({\c})= \hat\psi'({\c}),
\end{equation}
and that 
\begin{equation}
  \label{eq:24}
  \widehat{\mathbf h}({\c},\mathbf g)=-\widehat{\mathbf M}({\c},\mathbf g)\mathbf g
\end{equation}
with the \emph{tensorial-mobility mapping} $\widehat{\mathbf M}$ satisfying:
\begin{equation}\label{eq:22}
  \widehat{\mathbf M}({\c},\mathbf g)\mathbf g\cdot\mathbf g\ge 0.
\end{equation}
It is important to notice that:
\begin{itemize}
\item  the relation \eqref{eq:5} between chemical potential and free--energy mapping is recovered as a derived assertion;
\item  the (isotropic) Fick's law \eqref{eq:2} can be recovered as a special case of \eqref{eq:24}, by choosing:
\begin{equation}
  \label{eq:25}
  \widehat{\mathbf M}({\c},\mathbf g)=\alpha\mathbf I,
\end{equation}
where the constant $\alpha$ is non-negative, as demanded by \eqref{eq:22}.
\end{itemize}
For the sake of completeness, we briefly recapitulate the steps leading to \eqref{eq:15}--\eqref{eq:22}, and we refer to \cite{GurtiFA2010} for details. We begin by noticing that, on account of \eqref{eq:11}, the dissipation inequality \eqref{eq:9} becomes:
\begin{equation}
  \label{eq:12}
 (\widehat\psi'({\c})-\widehat\mu({\c}))\dot{\c}+\widehat{\mathbf h}({\c},\mathbf g)\cdot\mathbf g\le 0.
\end{equation}
Following Coleman \& Noll \cite{ColemN1963ARMA}, we consider a process such that, at a given point, and at a given time, $\mathbf g=\mathbf 0$, with ${\c}$ and $\dot{\c}$ attaining arbitrary values. At that particular point and time, the dissipation inequality specializes to:
\begin{equation}
  \label{eq:155}
  (\widehat\psi'({\c})-\widehat\mu({\c}))\dot{\c}\le 0.
\end{equation}
We insists on asking that \eqref{eq:155} be satisfied for whatever choice of ${\c}$ and $\dot{\c}$. Such requirement can be met only if the constitutive mappings for free energy and chemical potential are related by \eqref{eq:15}.

In view of \eqref{eq:15}, what is left with \eqref{eq:12} is the so--called \emph{residual inequality}:
\begin{equation}
  \label{eq:16}
  \widehat{\mathbf h}({\c},\mathbf g)\cdot\mathbf g\le 0.
\end{equation}
Further conclusions can be drawn from \eqref{eq:16} by fixing ${\c}$ and $\mathbf g$, and by looking at the function:
\begin{equation}
  \label{eq:31}
\lambda\mapsto \widehat{\mathbf h}({\c},\lambda\mathbf g)\cdot\mathbf g.
\end{equation}
By \eqref{eq:16}, the function specified in \eqref{eq:31} changes its sign at $\lambda=0$. Moreover, this function is smooth, granted that the constitutive mapping $\widehat{\mathbf h}$ is smooth. Then, we have necessarily $\widehat{\mathbf h}({\c},\mathbf 0)\cdot\mathbf g=0$. Since ${\c}$ and $\mathbf g$ can be chosen arbitrarily, we conclude that:
\begin{equation}
  \label{eq:34}
  \widehat{\mathbf h}({\c},\mathbf 0)=\mathbf 0.
\end{equation}
In words: if the gradient of chemical potential is null, then the flux of diffusant is null as well. As a consequence of \eqref{eq:34}, the constitutive mapping $\widehat{\mathbf h}$ admits the representation \eqref{eq:24}, with \eqref{eq:22} being required by the residual inequality \eqref{eq:16}.

\section{The parabolic regularization and its generalizations}\label{sec:parabreg}
Willing to explore constitutive dependencies more general than \eqref{eq:11}, we notice that the free-energy imbalance prompts the inclusion of $\dot{\c}$ in the set of independent variables appearing in the constitutive equations. We start from the following generalization of \eqref{eq:11}:
\begin{equation}
  \label{eq:111}
  \psi=\widetilde\psi({\c},\dot{\c}),\quad \mu=\widetilde\mu({\c},\dot{\c}),\quad \mathbf h=\widetilde{\mathbf h}({\c},\dot{\c},\mathbf g).
\end{equation}
We assume that the constitutive mappings appearing in \eqref{eq:111} are all \emph{smooth}. Then, the dissipation inequality \eqref{eq:9} takes the form:
\begin{equation}
  \label{eq:112}
(\partial_{\c}\widetilde\psi({\c},\dot{\c},\mathbf g)-\widetilde\mu({\c},\dot{\c}))\dot{\c}+\widetilde{\mathbf h}({\c},\dot{\c},\mathbf g)\cdot\mathbf g+\partial_{\dot{\c}}\widetilde\psi({\c},\dot{\c},\mathbf g) \ddot{\c}\le 0.
\end{equation}
If we assume that \eqref{eq:112} holds true for whatever continuation of a process, and hence for any arbitrary assignment of $\ddot{\c}$ and $\dot{\mathbf g}$, we see that $\partial_{\dot{\c}}\widetilde\psi=0$. Thus, there exists a function $\widehat\psi$ such that:
\begin{equation}\label{eq:98}
  \widetilde\psi({\c},\dot{\c})=\widehat\psi({\c}),
\end{equation}
and \eqref{eq:112} becomes: 
\begin{equation}
  \label{eq:17}
(\partial_{\c}\widehat\psi({\c})-\widetilde\mu({\c},\dot{\c}))\dot{\c}+\widetilde{\mathbf h}({\c},\dot{\c},\mathbf g)\cdot\mathbf g\le 0.
\end{equation}
Arguing as in \cite[Appendix]{BertsPV2001AMPA}, we set $\mathbf g=\mathbf 0$ in the above inequality, and we reckon that the function $\lambda\mapsto \partial_{\c}\widehat\psi({\c})-\widetilde\mu({\c},\lambda)$ changes its sign at $\lambda=0$. As this function is smooth, we conclude that it vanishes for $\lambda=0$, and hence:
\begin{equation}\label{eq:97}
  \widetilde\mu({\c},0)=\partial_{\c}\widehat\psi({\c}).
\end{equation}
It follows from \eqref{eq:97} that exists $\widehat\beta({\c},\dot{\c})$ such that
\begin{equation}
  \label{eq:28}
  \widetilde\mu({\c},\dot{\c})=\partial_{\c}\widehat\psi({\c})+\widehat\beta({\c},\dot{\c})\dot{\c}.
\end{equation}
Next, we define
\begin{equation}
  \label{eq:8}
  \widehat{\mathbf h}({\c},\mathbf g)=\widetilde{\mathbf h}({\c},0,\mathbf g),
\end{equation}
and 
\begin{equation}
  \label{eq:18}
  \widehat{\mathbf m}({\c},\dot{\c},\mathbf g)=
\begin{cases}
 \partial_{\dot{\c}}\widetilde{\mathbf h}({\c},0,\mathbf g)&\quad\textrm{if }{\dot\c}=0,\\
 \displaystyle\frac{ \widetilde{\mathbf h}({\c},\dot{\c},\mathbf g)-\widehat{\mathbf h}({\c},\mathbf g)}{\dot{\c}}&\textrm{otherwise}.
\end{cases}
\end{equation}
Then, we can write
\begin{equation}
  \label{eq:29}
  \widetilde{\mathbf h}({\c},\dot{\c},\mathbf g)=\widehat{\mathbf h}({\c},\mathbf g)+\widehat{\mathbf m}({\c},\dot{\c},\mathbf g)\dot{\c}.
\end{equation}
On taking \eqref{eq:28} and \eqref{eq:29} into account, we arrive at the following form of the dissipation inequality:
\begin{equation}
  \label{eq:117}
-\widehat\beta({\c},\dot{\c})\dot{\c}+\mathbf g\cdot\widehat{\mathbf m}({\c},\dot{\c},\mathbf g)\dot{\c}+\widehat{\mathbf h}({\c},\mathbf g)\cdot\mathbf g\le 0.
\end{equation}
On setting $\dot{\c}=0$ and on insisting that \eqref{eq:117} holds for whatever choice of $\mathbf g$, we recover the representation \eqref{eq:24} for $\widehat{\mathbf h}$. 

Now, on taking into account \eqref{eq:98}, \eqref{eq:28}, \eqref{eq:24}, and \eqref{eq:29}, we can rewrite the constitutive equations \eqref{eq:111} as:
\begin{equation}
  \label{eq:30}
  \psi=\widehat\psi({\c}),\quad \mu=\partial_{\c}\widehat\psi({\c})+\widehat\beta({\c},\dot{\c})\dot{\c},\quad\mathbf h=-\widehat{\mathbf M}({\c},\mathbf g)\mathbf g+\dot{\c}\,\widehat{\mathbf m}({\c},\dot{\c},\mathbf g).
\end{equation}
We are now in position to interpret the parabolic regularized equation \eqref{eq:13} as a consequence of the following specifications in \eqref{eq:30}:
\begin{equation}
  \label{eq:19}
  \widehat\beta({\c},\dot{\c})=\beta,\qquad \widehat{\mathbf M}(u,\mathbf g)=\alpha\mathbf I,\qquad\widehat{\mathbf m}({\c},\dot{\c},\mathbf g)=\mathbf 0,
\end{equation}
where the positivity of the constants $\alpha$ and $\beta$ guarantees that the dissipation inequality is never violated.

\section{Existence of weak solutions in the viscous case}\label{sec:appendix}
In this section we consider the following system:
\begin{subequations}
\begin{equation}\label{system}
\begin{split}
  &\dot{\c}+\textrm{div}(\alpha\nabla\mu)=0, \\ 
  & \mu=\beta\dot{\c}+\widehat\psi'({\c}),
\end{split}
\end{equation}
with $\alpha$ and $\beta$ strictly positive constants. We obtain system \eqref{system} by combining the mass-balance equation \eqref{eq:1} with the constitutive prescription \eqref{eq:2}  and the viscous regularization \eqref{eq:53}. For $\Omega\subset\mathbb R^3$ a domain with smooth boundary $\Gamma$, and for $T>0$,  we prove existence of weak solutions to \eqref{system} in the parabolic domain $Q=\Omega\times(0,T)$ with the initial condition
\begin{equation}
u(0)=u_0\qquad \textrm{ in }\Omega,
\end{equation}
and with the (possibly) non-homogeneous Neumann condition:
\begin{align}\label{eq:93}
  \nabla\mu\cdot \mathbf n=h\quad\textrm{ on }\Sigma,
\end{align}
\end{subequations}
imposed on the parabolic boundary $\Sigma=\Gamma\times(0,T)$, with $\mathbf n$ the outward unit normal on $\Gamma$. The treatment of the non-homogeneous Dirichlet condition for $\mu$, can be found in \cite{BonetCT2015?}.

The proof proposed by Novick-Cohen and Pego in \cite{NovicP1991TAMS} is based on regarding \eqref{system} as an ODE in a suitable Banach space. Here we use a different approach, based on the Galerkin method and on compactness arguments (see for instance \cite{Lions1969} or \cite{Roubi2013Nonlinear}). Our main purpose here is to illustrate where the viscous regularization comes in handy as far as existence of solutions is concerned. 

For typographical convenience, we henceforth write $\psi$ in place of $\widehat\psi$; furthermore, we use the symbol $C_i$ to denote a generic positive constant that depends on the index $i$; moreover, given $p>1$ we denote by
\begin{equation}\label{eq:47}
  p'=\frac{p}{p-1}
\end{equation}
the conjugate H\"older exponent of $p$; moreover, as a rule, we do not relabel subsequences. Following standard notation, for $B$ a Banach space and $r\ge 1$ we denote by $L^r(0,T;B)$ the $L^r$--Bochner space of $B$-valued functions defined on the interval $(0,T)$, and by $H^1(0,T;B)$ the corresponding Sobolev-Bochner space. 
For the definition of these spaces, we make reference to Sections 1.5 and 7.1 of \cite{Roubi2013Nonlinear}. When elements of these functions spaces shall be involved in the definition of double integrals both with respect to the time variable $t$ and space variable $x$, the dependence on the latter shall be left tacit.
\medskip

\noindent\textbf{Assumptions.} We assume that the free energy is twice continuously differentiable, and that its second derivative be bounded from below, namely: 
\begin{equation}
  \label{eq:39}
  \color{blue}\psi''(r)\ge -M_0
\end{equation}
for some $M_0>0$, for all $r\in\mathbb R$.
Moreover, \color{blue}we assume that \color{black}there exist  $M_i>0$, $i=1\dots 5$ and $p\in[2,6)$ such that, for all $r\in\mathbb R$,
  \begin{equation}
    \label{eq:27}
    -M_1+M_2|r|^{p}\le \psi(r)\le M_3+M_4|r|^{p}
  \end{equation}
and
\begin{equation}
  \label{eq:44}
  |\psi'(r)|\le M_5(1+|r|^{p-1}).
\end{equation}
Finally, we assume:
\begin{subequations}\label{eq:75}
\begin{align}
 &h\in L^{p}(0,T;L^2(\Gamma)),\label{eq:75a}
\end{align}
and\color{black}
\begin{align}\label{eq:83}
 &{\c}_0\in H^1(\Omega),
\end{align}
\end{subequations}
with $p$ the same exponent as in \eqref{eq:27} and \eqref{eq:44}. Under the aforementioned assumptions, we are going prove the following:
\begin{theorem}[Existence of weak solutions to \eqref{system}]\label{thm:1}
There exist 
\begin{align*}
  \label{eq:46}
&{\c}\in L^\infty(0,T;H^1(\Omega))\cap H^1(0,T;L^2(\Omega)),\\
&\mu\in L^2(0,T;H^1(\Omega)),
\end{align*}
such that
\begin{subequations}\label{eq:68}
\begin{align}
&\int_\Omega \color{blue}\big(\dot{\c}(t) v+\alpha\nabla\mu(t)\cdot\nabla v\big)\color{black}\,{\rm d}x=\int_{\Gamma} h(t)v\,{\rm d}\Gamma\quad\forall v\in H^1(\Omega), \text{ for a.a. } t\in(0,T),\\
&\mu=\color{blue}\beta\color{black}\dot{\c}+\psi'({\c})\quad \textrm{a.e. in } Q,\label{eq:52}\\
&{\c}(0)={\c}_0 \quad \textrm{a.e. in }\Omega.\label{eq:52bis}
\end{align}
\end{subequations}
\end{theorem}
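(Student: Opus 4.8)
The plan is to construct approximate solutions by a Galerkin scheme, derive two uniform a priori estimates, and pass to the limit by compactness. I would take as Galerkin basis the eigenfunctions $\{w_k\}_{k\ge 1}$ of the Neumann Laplacian on $\Omega$, orthonormal in $L^2(\Omega)$ and orthogonal in $H^1(\Omega)$; since $w_1$ is constant ($\lambda_1=0$), constants lie in every $V_n=\mathrm{span}\{w_1,\dots,w_n\}$. On $V_n$ I would seek $u_n(t)=\sum_{k=1}^n c_k(t)w_k$ and $\mu_n(t)\in V_n$ solving the projected system $\int_\Omega(\dot u_n v+\alpha\nabla\mu_n\cdot\nabla v)=\int_\Gamma hv$ and $\int_\Omega\mu_n v=\int_\Omega(\beta\dot u_n+\psi'(u_n))v$ for all $v\in V_n$. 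Eliminating $\mu_n$ reduces this to the ODE system $(1+\alpha\beta\lambda_j)\dot c_j=\int_\Gamma hw_j-\alpha\lambda_j\int_\Omega\psi'(u_n)w_j$; as the coefficients $1+\alpha\beta\lambda_j$ are strictly positive and the right-hand side is continuous in $c$ and $L^p$ in $t$, Carath\'eodory's theorem yields a local solution, which the bounds below prolong to $[0,T]$.

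The heart of the argument is two uniform estimates, and this is where the viscous term is indispensable. For the energy estimate I would test the first equation with $v=\mu_n$ and the second with $v=\dot u_n\in V_n$, reaching $\frac{d}{dt}\int_\Omega\psi(u_n)+\beta\|\dot u_n\|_{L^2}^2+\alpha\|\nabla\mu_n\|_{L^2}^2=\int_\Gamma h\mu_n$. To close it I would estimate the boundary term by the trace inequality and Poincar\'e--Wirtinger, $\|\mu_n\|_{L^2(\Gamma)}\le C(\|\nabla\mu_n\|_{L^2}+|\bar\mu_n|)$, controlling the mean $\bar\mu_n$ via the mass identity $\frac{d}{dt}\int_\Omega u_n=\int_\Gamma h$ (test with $w_1$) together with the growth bound \eqref{eq:44}, which gives $|\bar\mu_n|\le C(1+\|h\|_{L^2(\Gamma)}+\|u_n\|_{L^p}^{p-1})$. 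Using the coercivity \eqref{eq:27} to dominate $\|u_n\|_{L^p}^{p}$ by $\int_\Omega\psi(u_n)$, absorbing $\alpha\|\nabla\mu_n\|^2$ on the left, and applying Young's inequality with exponents $(p,p')$ — which is precisely why \eqref{eq:75a} demands $h\in L^p(0,T;L^2(\Gamma))$, so that $\|h\|_{L^2(\Gamma)}^p\in L^1(0,T)$ — Gr\"onwall's lemma yields uniform bounds for $u_n$ in $L^\infty(0,T;L^p)$, for $\dot u_n$ in $L^2(Q)$, and for $\mu_n$ in $L^2(0,T;H^1)$. For the second estimate I would test the constitutive equation with $-\Delta u_n\in V_n$; since the $w_k$ are eigenfunctions, $-\Delta u_n$ is the genuine Laplacian and the Neumann condition kills the boundary term, giving $\frac{\beta}{2}\frac{d}{dt}\|\nabla u_n\|_{L^2}^2+\int_\Omega\psi''(u_n)|\nabla u_n|^2=\int_\Omega\nabla\mu_n\cdot\nabla u_n$. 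With $\psi''\ge-M_0$ from \eqref{eq:39}, the $\mu_n$-bound just obtained, the regularity \eqref{eq:83} of $u_0$ (so $\|\nabla u_n(0)\|$ is bounded, $u_{0n}$ being the projection of $u_0$), and Gr\"onwall, I conclude that $u_n$ is bounded in $L^\infty(0,T;H^1)$. It is the viscous term that produces both the dissipative $\beta\|\dot u_n\|^2$ and the time derivative $\frac{\beta}{2}\frac{d}{dt}\|\nabla u_n\|^2$; without it neither bound survives.

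Finally I would extract subsequences with $u_n\overset{*}{\rightharpoonup}u$ in $L^\infty(0,T;H^1)$, $\dot u_n\rightharpoonup\dot u$ in $L^2(Q)$, and $\mu_n\rightharpoonup\mu$ in $L^2(0,T;H^1)$. The first equation in \eqref{eq:68} passes to the limit by these weak convergences alone. The one genuinely nonlinear term is $\psi'(u_n)$ in \eqref{eq:52}: here I would invoke the Aubin--Lions lemma, using $u_n$ bounded in $L^2(0,T;H^1)$ and $\dot u_n$ in $L^2(Q)$ with $H^1(\Omega)\hookrightarrow\hookrightarrow L^2(\Omega)$, to obtain $u_n\to u$ strongly in $L^2(Q)$ and a.e.; continuity of $\psi'$ then gives $\psi'(u_n)\to\psi'(u)$ a.e., while \eqref{eq:44} and the identity $(p-1)p'=p$ show $\psi'(u_n)$ is bounded in $L^{p'}(Q)$, so $\psi'(u_n)\rightharpoonup\psi'(u)$ and \eqref{eq:52} holds in the limit. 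The initial condition \eqref{eq:52bis} follows because evaluation at $t=0$ is continuous on $H^1(0,T;L^2(\Omega))\hookrightarrow C([0,T];L^2(\Omega))$ while $u_n(0)=u_{0n}\to u_0$. I expect the main obstacle to be the energy estimate: controlling the nonhomogeneous Neumann contribution $\int_\Gamma h\mu_n$ and matching the time integrability of $h$ to the polynomial growth of $\psi$ so that the Gr\"onwall argument closes.
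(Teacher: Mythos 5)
Your proposal is correct and follows essentially the same route as the paper's proof: a Galerkin scheme on the Neumann eigenfunctions, the energy identity obtained by testing the two equations with $\mu_n$ and $\dot u_n$, a second gradient estimate obtained by testing the constitutive relation with $-\Delta u_n$ (exploiting $\psi''\ge -M_0$ and the viscous term exactly as the paper does), Aubin--Lions compactness to handle $\psi'(u_n)$, and weak limit passage. The small deviations you make are all sound, and in places slightly cleaner. In controlling the mean of $\mu_n$ you first reduce $\beta\int_\Omega\dot u_n\,{\rm d}x$ to $\beta\int_\Gamma h\,{\rm d}\Gamma$ via the mass identity, whereas the paper's bound \eqref{eq:37} retains a $|\dot u_n|^{p'}$ term that must later be absorbed by Young's inequality. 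In the gradient estimate you bound $\int_\Omega\nabla\mu_n\cdot\nabla u_n\,{\rm d}x$ directly by Cauchy--Schwarz using the already-established $L^2(Q)$ bound on $\nabla\mu_n$, whereas the paper cancels this term against the diffusion equation tested with $u_n$ and instead estimates the boundary term $\int_\Gamma h u_n\,{\rm d}\Gamma$ by the trace theorem; both close under Gr\"onwall. In the nonlinear limit passage you use the compact embedding $H^1(\Omega)\hookrightarrow\hookrightarrow L^2(\Omega)$, a.e.\ convergence, and the $L^{p'}(Q)$ bound on $\psi'(u_n)$ coming from \eqref{eq:44} and \eqref{aa}, whereas the paper uses the compact embedding into $L^p(\Omega)$ (this is where $p<6$ enters there) together with continuity of the Nemytski\v{\i} operator to get the strong convergence \eqref{eq:50}; your variant is equally valid since a.e.\ convergence plus boundedness in $L^{p'}(Q)$ yields weak $L^{p'}(Q)$ convergence to the right limit. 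Finally, your initial-condition argument (weak continuity of the evaluation $u\mapsto u(0)$ on $H^1(0,T;L^2(\Omega))$, combined with \eqref{eq:76}) is a legitimate shortcut for the paper's integration-by-parts Step 7.
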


\begin{proof} For the sake of readability, we split the proof into a sequence of intermediate steps.
\medskip

\emph{Step 1. Selection of a basis.} We denote by $\{v_n\}_{n=1}^\infty$ the eigenfunctions of the Laplace operator with Neumann boundary conditions:\marginpar{\tiny ELLIPTIC REGULARITY}
\begin{subequations}\label{eq:1312}
\begin{align}
  -\Delta v_n=\lambda_n v_n\quad\textrm{in }\Omega,\\
  \nabla v_n\cdot\mathbf n=0\quad\textrm{on }\Gamma,
\end{align}
\end{subequations}
and we define
\begin{equation}
V_n={\rm span}(v_1,\dots v_n).
\end{equation}
The set $\cup_n V_n$ is dense in $H^1(\Omega)$ and in $L^p(\Omega)$, where $p$ is the exponent in \eqref{eq:27} and \eqref{eq:44}. \color{blue}Without any loss of generality, we can assume that the collection $\{v_n\}_{n=1}^\infty$ is an orthonormal system \color{black}for $L^2(\Omega)$, that is:
\begin{equation}
  \label{eq:42}
  \int_\Omega v_nv_k\,{\rm d}x=\delta_{nk},
\end{equation}
and that the first element of the basis is the constant function: $v_1(x)=1/{\sqrt{|\Omega|}}$. 
We approximate the \emph{initial datum} ${\c}_0$ through a sequence $\{{\c}_{0,n}\}_{n=1}^\infty$ of functions having the form
 \begin{equation}
  \label{eq:54}
  {\c}_{0,n}(x)=\sum_{i=1}^n a_{0,n,i}v_i(x)
\end{equation}
such that 
\begin{equation}
  \label{eq:76}
{\c}_{0,n}\to {\c}_0 \quad\textrm{ strongly in }H^1(\Omega),
\end{equation}
with convergence holding also in $L^p(\Omega)$ because of the Sobolev embedding. Note that, thanks to our assumption \eqref{eq:83} on the initial datum  and to our growth assumption \eqref{eq:27} on the coarse-grain free energy $\psi$, we have
\begin{equation}
  \label{eq:55}
  \psi({\c}_{0,n})\to\psi({\c}_0)\quad\textrm{ strongly in }L^1(\Omega).
\end{equation}
Thus, in particular, there exists a positive constant $C_0$ such that:
\begin{equation}\label{eq:84}
  \int_\Omega \psi({\c}_{0,n}){\rm d}x\le C_0.
\end{equation}
\medskip

\emph{Step 2. Construction of an approximating sequence.} We introduce the \emph{Galerkin approximations}
\begin{subequations}\label{eq:3}
\begin{align}
  &{\c}_n(x,t)=\sum_{i=1}^n a_{n,i}(t) v_i(x),\label{eq:3a}\\
 &\mu_n(x,t)=\sum_{i=1}^n b_{n,i}(t)v_i(x),\label{eq:3b}
\end{align}
\end{subequations}
and we look for time-dependent coefficients $(a_{n,i}(t),b_{n,i}(t))_{i=1\dots n}$ such that
\begin{subequations}\label{eq:4}
\begin{align}
 & \int_\Omega \color{blue}\big(\color{black}\dot{\c}_n v+\alpha\nabla\mu_n\cdot\nabla v\color{blue}\big)\color{black}\,{\rm d}x=\int_{\Gamma} h v\,{\rm d}\Gamma \qquad\textrm{for all } v\in V_n,\label{eq:4a}\\
 & \int_\Omega \mu_nv \,{\rm d}x=\int_\Omega \big(\beta\dot{\c}_n+\psi'({\c}_n)\big)v\,{\rm d}x=0\qquad\textrm{for all }v\in V_n,\label{eq:4b}
\end{align}
\end{subequations}
at all times, and 
\begin{equation}
  \label{eq:49}
   {\c}_n(0)={\c}_{0,n}.
\end{equation}
To this aim, by testing \eqref{eq:4a} and \eqref{eq:4b} by $v_i$, we find that the system \eqref{eq:4} is equivalent to:
\begin{equation}\label{eq:69}
\left.
\begin{array}{l}
 \dot a_{n,i}(t)+\alpha\lambda_i b_{n,i}(t)=H_i(t),\\[1em]
 b_{n,i}(t)=\beta\dot a_{n,i}(t)+G_{n,i}(a_{n,1}(t),\dots a_{n,n}(t)),
\end{array}
\right\}i=1,\dots,n,
\end{equation}
with initial conditions:
\begin{equation}\label{eq:74}
   a_{n,i}(0)=a_{0,n,i},\qquad i=1,\dots,n,
\end{equation}
where $H_{i}(t)=\int_{\Gamma} h(x,t) v_i(x){\rm d}\Gamma$ and where the functions $G_{n,i}:\mathbb R^n\to\mathbb R$ are defined by $G_{n,i}(r_1,\dots, r_n)=\int_\Omega \psi'\Big(\sum_{j=1}^n r_j v_j(x)\Big) v_i(x)\,{\rm d}x$. Since the eigenvalues $\lambda_i$ are non negative, the two groups of equations in \eqref{eq:69} can be combined to obtain a $n\times n$ system of ordinary differential equations:
\begin{align}\label{eq:23}
  \dot a_{n,i}(t)=\frac{H_i(t)-\alpha\lambda_iG_{n,i}(a_{n,1}(t),\dots a_{n,n}(t))}{1+\alpha\beta\lambda_i},\quad i=1\dots n,
\end{align}
which together with the initial conditions \eqref{eq:74} defines a Cauchy problem. 

By the smoothness of $\psi$ and of the basis functions, the functions $G_{n,i}$ are smooth as well, and for each $i\in 1\dots n$ the right-hand side in \eqref{eq:23} is locally Lipschitz continuous. Thus, according to standard ODE theory, the Cauchy problem \eqref{eq:74}--\eqref{eq:23} has a unique solution on a non-empty interval $(0,T_n)$. If $T_n<T$, we must show that the solution admits a continuation up to time $T$. We achieve this goal in the next step by deriving bounds on the coefficients in \eqref{eq:3} based on an estimate  which mimicks the natural energetic estimate for the original system \eqref{system}.\medskip

\emph{Step 3. Energetic estimate.} 
We consider a subinterval $(0,t)$ of $(0,T_n)$, and for each $s\in (0,t)$ we test \eqref{eq:4a} and \eqref{eq:4b} with $\mu_n(s)$ and $-\dot{\c}_n(s)$, respectively. On adding the resulting equations and on integrating over $(0,t)$ we obtain:
\begin{multline}
  \label{eq:10}
  \int_\Omega \psi({\c}_n(t))\,{\rm d}x+\int_0^t\!\!\int_\Omega \color{blue}\big(\color{black}\alpha|\nabla\mu_n(s)|^2+\beta\dot{\c}_n^2(s)\color{blue}\big)\color{black}\,{\rm d}x{\rm d}s\\
= \int_0^t\!\!\int_{{\Gamma}}h(s)\mu_n(s)\,{\rm d}\Gamma{\rm d}s+
\int_\Omega \psi({\c}_{0,n}){\rm d}x.
\end{multline}
Thanks to the coercivity assumption in \eqref{eq:27}, and thanks to \eqref{eq:84}, we have
\newcounter{constant}
\setcounter{constant} 0
\begin{multline}\label{eq:57}
  M_2\int_\Omega |u(t)|^p\,{\rm d}x+\int_0^t\!\!\int_\Omega \big(\alpha|\nabla\mu_n(s)|^2+\beta\dot{\c}_n^2(s)\big){\rm d}x{\rm d}s
\\ 
\le \underbrace{M_1|\Omega|+C_0}_{\displaystyle C_{\addtocounter{constant}{1}\theconstant}}+\int_0^t\!\!\int_{{\Gamma}}h(s)\mu_n(s){\rm d}\Gamma{\rm d}s,
\end{multline}
where $M_1$ and $C_0$ are the constants in, respectively, \eqref{eq:27} and \eqref{eq:84}. Using, in the order, the trace theorem, Poincare's inequality, and Young's inequality, we have, at each time (whose specification we omit to keep our notation terse),
\begin{align}
  \int_{{\Gamma}}h\mu_n\,{\rm d}\Gamma&\le \|h\|_{L^2(\Gamma)}\|\mu_n\|_{L^2(\Gamma)}
\nonumber
\\
&\le C_{\addtocounter{constant}{1}\theconstant}\|h\|_{L^2({\Gamma})}\|\mu_n\|_{H^1(\Omega)}\nonumber
\\
&\le C_{\addtocounter{constant}{1}\theconstant} \|h\|_{L^2({\Gamma})}\left(\|\nabla\mu_n\|_{L^2(\Omega)}+\Big|\int_\Omega \mu_n\,{\rm d}x\Big|\right)\nonumber\\
&\le C_{\addtocounter{constant}{1}\theconstant}\left(\frac 1{\delta^p}\|h\|^p_{L^2({\Gamma})}+{\delta^{p'}}\left(\|\nabla\mu_n\|^{p'}_{L^2(\Omega)}+\Big|\int_\Omega \mu_n\,{\rm d}x\Big|^{p'}\right)\right),
\label{eq:38}
\end{align}
where we observe that (recall \eqref{eq:47} and the assumption on $p$ in the sentence preceding \eqref{eq:27}):
\begin{equation}\label{eq:40}
  p'\le 2.
\end{equation}
On choosing $v=1$ as test in \eqref{eq:4b}, we obtain $\int_\Omega \mu_n{\rm d}x=\int_\Omega\big(\alpha\dot{\c}_n+\psi'({\c}_n)\big){\rm d}x$. Then, on taking the growth assumption \eqref{eq:44} into account, we get $\Big|\int_\Omega \mu_n\,{\rm d}x\Big|\le  C_{\addtocounter{constant}{1}\theconstant}\Big(1+\int_\Omega \big(|\dot{\c}_n|+|{\c}_n|^{p-1}\big)\,{\rm d}x\Big)$, and hence:
\begin{align}
  \label{eq:37}
  \Big|\int_\Omega \mu_n\,{\rm d}x\Big|^{p'}\le {} & C_{\addtocounter{constant}{1}\theconstant}\Big(1+\int_\Omega \big(|\dot{\c}_n|^{p'}+|{\c}_n|^{p}\big)\,{\rm d}x\Big).
\end{align} 
Thus, \eqref{eq:38} and \eqref{eq:37}, along with our assumption \eqref{eq:75a} on $h$, and \eqref{eq:40}, yield, for every $\delta>0$,
\begin{equation}\label{eq:86}
\begin{aligned}
  \int_{{\Gamma}}h\mu_n\,{\rm d}\Gamma&\le \frac {C_4}{\delta^p}\|h\|_{L^2({\Gamma})}^p+\delta^{p'}C_{\addtocounter{constant}{1}\theconstant}\Big(1+\|\nabla\mu_n\|^{p'}_{L^2(\Omega)}+\int_\Omega \big(|\dot{\c}_n|^{p'}+|{\c}_n|^{p}\big){\rm d}x\Big)\\
 &\le \frac{C_{\addtocounter{constant}{1}\theconstant}}{\delta^p}+\delta^{p'}C_{\addtocounter{constant}{1}\theconstant}\Big(1+\int_\Omega \big(|\nabla\mu_n|^{2}+|\dot{\c}_n|^{2}+|{\c}_n|^{p}\big){\rm d}x\Big).
\end{aligned}
\end{equation}
By gathering \eqref{eq:57} and \eqref{eq:86}, we obtain
\begin{multline}\label{eq:72}
  M_2\int_\Omega |u_n(t)|^p\,{\rm d}x+\int_0^t\!\!\int_\Omega \big(\alpha|\nabla\mu_n(s)|^2+\beta\dot{\c}_n^2(s)\big)\,{\rm d}x{\rm d}s
\\
\le C_1+\frac {C_8}{\delta^{p}}+\delta^{p'}C_{9}\int_0^t\!\!\int_\Omega \big(|\nabla\mu_n(s)|^2+|\dot{\c}_n(s)|^{2}+|{\c}_n(s)|^{p}\big){\rm d}x{\rm d}s,
\end{multline}
whence, on taking $\delta$ sufficiently small,
\begin{equation}
  \label{eq:10-bis}
 M_2\int_\Omega |{\c}_n(t)|^p{\rm d}x+\frac 12 \int_0^t\!\!\int_\Omega \big(\alpha|\nabla\mu_n(s)|^2+\beta\dot{\c}_n^2(s)\big){\rm d}x{\rm d}s\le C_{\addtocounter{constant}{1}\theconstant}\Big(1+\int_0^t\!\!\int_\Omega|{\c}_n(s)|^{p}{\rm d}x{\rm d}s\Big).
\end{equation}
By using Gronwall's lemma in \eqref{eq:10-bis}, we obtain that the coefficients $a\nn_i(t)$ are bounded for $t\in (0,T_n)$ by a constant that does not depend on $n$. It then follows that the solution of \eqref{eq:23} can be continued up to time $T$ and that the bounds:
\begin{subequations}\label{eq:2211}
\begin{align}
&\|{\c}_n\|_{L^\infty(0,T;L^p(\Omega))}\le C_{\addtocounter{constant}{1}\theconstant},\label{aa}\\
&\|\dot{\c}_n\|_{L^2(Q)}\le C_{\addtocounter{constant}{1}\theconstant},\label{bb}
\end{align}
\end{subequations}
hold uniformly with respect to $n$. Moreover, we have that $\|\nabla\mu_n\|_{L^2(Q)}$ is uniformly bounded, a fact that, together with \eqref{eq:37} and \eqref{eq:2211} yields
\begin{equation}
  \label{eq:43}
  \|\mu_n\|_{L^2(0,T;H^1(\Omega))}\le C_{\addtocounter{constant}{1}\theconstant}.
\end{equation}
\medskip

\emph{Step 4. Limit passage in the diffusion equation.}
The estimates \eqref{bb} and \eqref{eq:43} are what we need to pass to the limit in the linear equation \eqref{eq:4a} to obtain \eqref{eq:46}. Indeed, thanks to these estimates there exists a subsequence (not relabeled) such that
\begin{align}
  \label{eq:44a}
& {\c}_n\to{\c}\quad\textrm{weakly in } H^1(0,T;L^2(\Omega)),\\
 & \mu_n\to \mu \quad\textrm{weakly in }L^2(0,T;H^1(\Omega)).\label{eq:44b}
\end{align}
Now, given any $k\le n$, we have that $v_k$ is a legal test in \eqref{eq:4a}. Thus, for every for every $t\in(0,T)$ we can consider:
\begin{equation}
  \label{eq:32}
  0=\frac 1 h\int_{t}^{t+h}\int_\Omega \big(\dot {\c}_n(s) v_k+\alpha\nabla\mu_n(s)\cdot\nabla v_k\big){\rm d}x{\rm d}s=0\qquad\forall n\ge k,
\end{equation}
for all $0<h<T-t$. On letting first $n\to\infty$ and then $h\to 0$ in \eqref{eq:32}, the Steklov average in \eqref{eq:32} converges for a.e. $t\in(0,T)$:
\begin{equation}
  \label{eq:32b}
  \int_\Omega \dot {\c}(t) v_k+\alpha\nabla\mu(t)\cdot\nabla v_k=0\qquad\forall t\in (0,T)\setminus E_k,
\end{equation}
where $\textrm{meas}(E_k)=0$ is a set that depends on the particular $k$. Now, given that \eqref{eq:32b} holds for $k$ arbitraty, we conclude that 
\begin{equation}
  \label{eq:32b}
  \int_\Omega \big(\dot {\c}(t) v+\alpha\nabla\mu(t)\cdot\nabla v\big){\rm d}x=0\quad\forall v\in \cup_k V_k,\quad \forall t\in (0,T)\setminus \cup_k E_k.
\end{equation}
Since $\cup_k V_k$ is dense in $H^1(\Omega)$, and since the set $\cup_k E_k$ has zero measure, we conclude that
 \eqref{eq:46} holds true.\medskip

\emph{Step 5. Gradient estimate on concentration.}
In order to derive \eqref{eq:52} by passing to the limit in \eqref{eq:4b} weak convergence does not suffice because of the possibly nonlinear term $\psi'({\c}_n)$. We therefore seek extra compactness properties for ${\c}_n$, a piece of information that we shall gather from an additional estimate. In order to get this estimate, we first choose ${\c}_n$ as tests in \eqref{eq:4a} to get, for every $s\in[0,T]$,
\begin{subequations}
\begin{align}\label{eq:87}
&\int_\Omega \color{blue}\big(\dot{\c}_n(s){\c}_n(s)+\alpha\nabla\mu_n(s)\cdot\nabla {\c}_n(s)\big)\color{black}\,{\rm d}x=\int_{\Gamma} h(s){\c}_n(s)\,{\rm d}x;
\end{align}
then, we take $v= -\alpha \Delta {\c}_n$ in \eqref{eq:4b} and we resort to by-parts integration to arrive at:
\begin{align}\label{eq:88}
&\alpha\int_\Omega \nabla\mu_n(s)\cdot\nabla{\c}_n(s){\rm d}x=\color{blue}\alpha\int_\Omega \big(\beta\color{black}\nabla\dot{\c}_n(s)\cdot\nabla{\c}_n(s)+\psi''({\c}_n(s))|\nabla{\c}_n(s)|^2\big){\rm d}x,
\end{align}
\end{subequations}
which again holds for all $s\in[0,T]$; next, we subtract \eqref{eq:88} from \eqref{eq:87} and we integrate with respect to $s$ on $(0,t)$ to get:
\begin{multline}
  \label{eq:21}
  \frac 1 2\int_\Omega \big({\c}_n^2(t)+\alpha\beta|\nabla{\c}_n(t)|^2\big){\rm d}x+\alpha\int_0^t\!\!\int_\Omega \psi''({\c}_n(s))|\nabla{\c}_n(s)|^2\,{\rm d}x{\rm d}s\\
=\frac 1 2\int_\Omega\big( {\c}_n^2(x,0)+\alpha\beta|\nabla{\c}_n(x,0)|^2\big){\rm d}x+
\int_0^t\!\!\int_\Gamma h(s){\c}_n(s)\,{\rm d}\Gamma{\rm d}s
\end{multline}
It is precisely at this point that the viscous regularization comes in handy. Indeed, with the aid of Assumption \eqref{eq:39} we obtain from \eqref{eq:21} the following inequality:
\begin{multline}\label{eq:85}
  \frac 1 2\int_\Omega \big({\c}_n^2(t)+\alpha\beta|\nabla{\c}_n(t)|^2\big){\rm d}x\le
\alpha M_0\int_0^t\!\!\int_\Omega |\nabla{\c}_n(s)|^2\,{\rm d}s+\int_0^t\!\!\int_\Gamma h(s){\c}_n(s)\,{\rm d}\Gamma{\rm d}s\\
+\frac 1 2\int_\Omega\big( {\c}_n^2(x,0)+\alpha\beta|\nabla{\c}_n(x,0)|^2\big){\rm d}x.
\end{multline}
Next, by making use of \eqref{eq:75a} and by invoking the trace theorem, we estimate:
\begin{align}
  \int_0^t\!\!\int_\Gamma h(s){\c}_n(s)\,{\rm d}\Gamma{\rm d}s&\le  \frac 12 \int_0^t\!\!\int_\Gamma h^2(s)\,{\rm d}\Gamma{\rm d}s+\frac 12 \int_0^t\!\!\int_\Gamma {\c}^2_n(s)\,{\rm d}\Gamma{\rm d}s\nonumber
\\
&\le C_{\addtocounter{constant}{1}\theconstant}\left(1+\int_0^t\!\!\|{\c}^2_n(s)\|^2_{H^1(\Omega)}\,{\rm d}s\right).\label{eq:20}
\end{align}
On combining the inequalities \eqref{eq:85} and \eqref{eq:20}, and by recalling \eqref{eq:76}, we arrive at:
\begin{equation}\label{eq:85}
  \|{\c_n}(t)\|_{H^1(\Omega)}^2\le C_{\addtocounter{constant}{1}\theconstant}\left(1+\int_0^t\|{\c_n}(s)\|_{H^1(\Omega)}^2\,{\rm d}s\right).
\end{equation}
Now, the application of Gronwall's lemma to \eqref{eq:85} yields:
\begin{align}
 &\|{\c}_n\|_{L^\infty(0,T;H^1(\Omega))}\le C_{\addtocounter{constant}{1}\theconstant},\label{eq:233}
\end{align}
a bound that we are going to use in the next step to pass to the limit in the nonlinear equation \eqref{eq:4b}. \medskip

\emph{Step 6. Limit passage in the equation for chemical potential.} Since $p<6$ and $\Omega$ is a three-dimensional domain, $H^1(\Omega)$  is compactly embedded in $L^p(\Omega)$. Consequently, we can use the bounds \eqref{eq:233} and \eqref{bb}, along with the Aubin-Lions lemma in the form of Corollary 4 of \cite{Simon1987AMPA4}, to establish the following convergence result:
\begin{equation}
  \label{eq:45}
  {\c}_n\to{\c}\quad\textrm{in }C([0,T];L^p(\Omega)).
\end{equation}
We note on passing that \eqref{eq:45}, along with with \eqref{eq:76}, entails the initial condition \eqref{eq:52bis}.

Now, on rewriting Assumption \eqref{eq:44} as $|\psi'(r)|\le M_5(1+|r|^{p/p'})$, we immediately see that standard results concerning mappings between Lebesgue spaces (see for instance Theorem 1.27 in \cite{Roubi2013Nonlinear}) do apply, and we can conclude that the Nemytski\v{\i}  mapping $u\mapsto \psi'(u)$ is bounded and strongly continuous from $L^p(\Omega)$ to $L^{p'}(\Omega)$. Thus, the convergence
\begin{equation}
  \label{eq:50}
  \psi'({\c}_n)\to\psi'({\c})\quad\textrm{in }C([0,T];L^{p'}(\Omega))
\end{equation}
follows from \eqref{eq:45}. Now, pick arbitrary functions $a\in C([0,T])$ and $\phi\in L^p(\Omega)$, and let $\phi_n\in V_n$ be a sequence of linear combinations of basis functions such that 
\begin{equation}\label{eq:90}
  \phi_n\to\phi\quad\textrm{strongly in }L^p(\Omega).
\end{equation}
On testing \eqref{eq:4b} by $a(t)\phi_n$ at each particular time level $t\in(0,T)$ and on integrating over $(0,T)$ we get:
\begin{align}\label{eq:89}
 \int_0^T\!\!\!\int_\Omega \big(\mu_n(t)-\beta\dot{\c}_n(t)+\psi'({\c}_n(t))\big)a(t)\phi_n\,{\rm d}x{\rm d}t=0.
\end{align}
By relying on the weak convergences \eqref{eq:44a} and \eqref{eq:44b}, and on the strong convergences \eqref{eq:50} and \eqref{eq:90}, we can pass to the limit in \eqref{eq:89} to get:
\begin{align}\label{eq:91}
 \int_0^T \!\!\Big(\int_\Omega \big(\mu(t)-\beta\dot{\c}(t)-\psi'({\c}(t))\big)\phi\,{\rm d}x\Big)a(t){\rm d}t=0
\end{align} 
By the arbitrariness of the function $t\mapsto a(t)$ we conclude that
\begin{equation}\label{eq:99}
\int_\Omega \big(\mu(t)-\beta\dot{\c}(t)-\psi'({\c}(t))\big)\phi\,{\rm d}x\qquad \textrm{for a.e. }t\in(0,T)
. 
\end{equation}
Eventually, the arbitrariness of $\phi$ in \eqref{eq:99} entails \eqref{eq:52}.
\medskip

\emph{Step 7. Verification of the initial condition.} In order to show that the initial condition \eqref{eq:52bis} is satisfied, we begin by observing that, at the approximation level $n$, 
\begin{align}\label{eq:70}
  \int_0^T\!\!\!\int_\Omega\dot{\c}_n\phi\,{\rm d}x{\rm d}t=-\int_0^T\!\!\!\int_\Omega {\c}_n\dot\phi\,{\rm d}x{\rm d}t-\int_\Omega {\c}_{0,n}\phi(0)\hskip 0.2pt{\rm d}x
\end{align}
for every smooth test function $\phi\in C^\infty(\overline{Q})$ such that $\phi(T)=0$. By passing to the limit in \eqref{eq:70}, using the weak convergence of the sequences in \eqref{eq:44b} and the convergence of the initial conditions for the stated in \eqref{eq:76}, we arrive at:
\begin{align}\label{eq:51}
 \int_0^T\!\!\!\int_\Omega \dot{\c}\,\phi\,{\rm d}x{\rm d}t=-\int_0^T\!\!\!\int_\Omega {\c}\,\dot\phi\,{\rm d}x{\rm d}t-\int_\Omega {\c}_{0}\hskip 0.5pt\phi(0)\hskip 0.2pt{\rm d}x.
\end{align}
On the other hand, since $\dot u\in L^2(Q)$, we obtain, through integration by parts,
\begin{align}
  \hskip 1em\int_0^T\!\!\!\int_\Omega\dot{\c}\,\phi\,{\rm d}x{\rm d}t=-\int_0^T\!\!\!\int_\Omega {\c}\,\dot\phi\,{\rm d}x{\rm d}t-\int_\Omega {\c}(0)\hskip 0.2pt\phi(0)\hskip 0.5pt{\rm d}x,
\end{align}
whence, by a comparison with \eqref{eq:51}, 
\begin{align}\label{eq:71}
  \int_\Omega {\c}(0)\phi(0)\hskip 0.5pt{\rm d}x=\int_\Omega{\c}_{0}\, \phi(0)\hskip 0.5pt{\rm d}x,
\end{align}
and thence the initial condition \eqref{eq:52bis}, by the arbitrariness of $\phi(0)$.
\end{proof}

\section{Additional remarks}
\noindent\textbf{Non--smooth dissipation potentials and hysteresis.} Our discussion in Section 3 suggests that the class of viscous regularizations described by \eqref{eq:53} may be replaced with more encompassing prescriptions for $\mu$, provided that the result is consistent with the dissipation inequality. For example, one may want to drop the smoothness assumption we have made so far and consider the \emph{inclusion:}
\begin{equation}
  \label{eq:35}
   \mu-\widehat\psi'({\c})\in\partial\widehat\zeta(\dot{\c})
 \end{equation}
with $\widehat\zeta$ a proper, lower-semicontinuous, convex \emph{dissipation potential}. 
We here restrict attention to the special case:
\begin{equation}
   \label{eq:36}
   \widehat\zeta(r)=\frac 1 2\beta r^2+\gamma|r|,
 \end{equation}
with $\beta>0$ a \emph{viscous-damping parameter} and $\gamma>0$ a \emph{threshold parameter}. In this case, the subdifferential set of $\widehat\zeta$ at $r$ is given by:
 \begin{align}
   \partial\widehat\zeta(r)=
\begin{cases}
\displaystyle\Big\{\beta r+\gamma \frac{r}{|r|}\Big\}\ 
\qquad\textrm{if }r\neq 0,\\[0.5em]
[-\gamma,+\gamma]\, \qquad\qquad\textrm{if }r=0.
\end{cases}
 \end{align}
Accodingly, the inclusion \eqref{eq:35} is equivalent to:
\begin{subequations}\label{eq:33bb}
 \begin{align}
   &\mu=\widehat\psi'({\c})+\beta\dot{\c}+\gamma \frac{\dot{\c}}{|\dot{\c}|}\quad\textrm{if }\dot{\c}\neq 0,\\   
   &\mu-\widehat\psi'({\c})\in [-\gamma,\gamma]\ \, \qquad\textrm{if }\dot{\c}=0.
 \end{align}
\end{subequations}
The prescriptions \eqref{eq:33bb}, when combined with the PDE that results from \eqref{eq:1} and \eqref{eq:2}, namely,
\begin{equation}
  \label{eq:48}
  \dot{\c}=\alpha\Delta\mu,
\end{equation}
together with appropriate boundary conditions, yield an  evolution problem in the unknowns ${\c}$ and $\mu$. The analysis of this problem, which is more involved than the case of viscous regularization, shall be presented in \cite{BonetCT2015?}. 

We expect that the solutions of \eqref{eq:33bb}--\eqref{eq:48}  display \emph{hysteretic behavior}, even if the free energy mapping $\widehat\psi$ is convex. As an example to support this point, we consider that the free-energy mapping is quadratic:
\begin{equation}\label{eq:92}
  \widehat\psi(r)=\frac 12 Kr^2,
\end{equation}
so that \eqref{eq:33bb} becomes:
\begin{subequations}\label{eq:100}
 \begin{align}
   &\mu=K{\c}+\beta\dot{\c}+\gamma \frac{\dot{\c}}{|\dot{\c}|}\quad\textrm{if }\dot{\c}\neq 0,\\   
   &\mu-K{\c}\in [-\gamma,\gamma]\ \, \qquad\textrm{if }\dot{\c}=0.
 \end{align}
\end{subequations}
We impose a homogeneous initial condition:
\begin{equation}
  u(x,0)=0\qquad \textrm{for all }x\in\Omega.
\end{equation}
Moreover, instead of prescribing the flux at the boundary as in Section 4, we impose the Dirichlet condition:
\begin{equation}\label{eq:81}
  \mu(x,t)=Az\Big(\frac t\tau\Big)\qquad \textrm{for all } x\in\Gamma,
\end{equation}
where $A>0$ is the \emph{half-amplitude}, $\tau>0$ is the \emph{period} and $z(s)$ the \emph{zigzag function} (\emph{cf.} Fig 1.(a) below) defined on $[0,1)$ by:
\begin{equation}\label{eq:113}
 z(s)=
\left\{
\begin{array}{ll}
 4s&\textrm{if }s\in [0,1/4),\\[0.5em]
 2-4s&\textrm{if }s\in [1/4,3/4),\\[0.5em]
 -4+4s&\textrm{if }s\in [3/4,1),
\end{array}
\right.
\end{equation}
and extended periodically to $\mathbb R$. 

The particular form of the boundary datum prompts us to rewrite \eqref{eq:33bb} and \eqref{eq:48} using the \emph{dimensionless time}: 
\begin{equation}
  s=\frac t \tau
\end{equation}
in place of $t$ as independent variable. Accordingly, all fields are henceforth functions of $x$ and $s$, and a superimposed dot denotes the derivative with respect to $s$. Then, \eqref{eq:33bb} and \eqref{eq:48} become, respectively,
\begin{subequations}\label{eq:77}
 \begin{align}
   &\mu=K{\c}+\frac\beta\tau\dot{\c}+\gamma\frac{\dot{\c}}{\left|\dot{\c}\right|}\qquad\textrm{if }\dot{\c}\neq 0,\\   
   &\mu-K{\c}\in [-\gamma,\gamma]\ \ \quad\qquad\textrm{if }\dot{\c}=0,
 \end{align}
\end{subequations}
and
\begin{equation}\label{eq:78}
  \frac{\dot{\c}}\tau=\alpha\Delta\mu.
\end{equation}
We are interested in the (formal) limit 
\begin{equation}
  \tau\to\infty,
\end{equation}
 which describes a regime when the boundary datum changes slowly. In this regime, \eqref{eq:77} and \eqref{eq:78} become, respectively,
\begin{subequations}\label{eq:79}
 \begin{align}
   &\mu=K{\c}
+\gamma\frac{\dot{\c}}{\left|\dot{\c}\right|}\ \ \qquad\qquad\textrm{if }\dot{\c}\neq 0,\\   
   &\mu-K{\c}\in [-\gamma,\gamma]\ \ \quad\qquad\textrm{if }\dot{\c}=0,
 \end{align}
\end{subequations}
and
\begin{equation}\label{eq:80}
  \alpha\Delta\mu=0.
\end{equation}
Now, in view of the boundary condition \eqref{eq:81}, it follows from \eqref{eq:80} that $\mu$ is spatially constant in $\Omega$ at each time, that is:
\begin{equation}
  \mu(x,s)=A z(s).
\end{equation}
Given that the initial condition for $u$ does not depend on $x$ as well, we can look for \emph{constant-in-space solutions}:
\begin{equation}
u(x,s)=v(s), 
\end{equation}
with $v(s)$ satisfying
\begin{subequations}\label{eq:82}
 \begin{align}
 &A z(s)=Kv(s)+\gamma \frac{\dot v(s)}{\left|\dot v(s)\right|}\ \ \hskip 0.1em \qquad\textrm{if }\dot v(s)\neq 0,\\
 &A z(s)-Kv(s)\in [-\gamma,\gamma]\qquad
\quad\textrm{if }\dot v(s)=0.
 \end{align}
\end{subequations}
Given that $z(0)=0$ by \eqref{eq:113}, system \eqref{eq:82} has a unique solution $v\in H^1(0,T;\mathbb R)$ satisfying the initial condition $v(0)=0$ \cite[Lemma 2.9]{Visin1994}. This solution, which in general is described by a \emph{play operator} $\mathcal E$ such that $v(\cdot)=\mathcal E(Az(\cdot),v(0))$, can be worked out explicitly in the present case. 

If the half-amplitude $A$ of the oscillation is smaller than or equal to the threshold $\gamma$, the solution is null at all times; otherwise, for $\mathcal P$ the projection on the closed interval $[-A+\gamma,A-\gamma]$, the solution is given by:
\begin{equation*}
Kv(s)=\left\{
\begin{array}{ll}
0&\textrm{if }s\in\left[0,\frac \gamma {4A}\right),\\[0.5em]
\mathcal P\left(Az\left(s-\frac \gamma {4A}\right)\right)&\textrm{if }s\in\left[\frac \gamma {4A},+\infty\right),
\end{array}\right.
\end{equation*}
as shown in Fig. 1.(a) below.
\begin{figure}[h]
\centering
\includegraphics[scale=0.45]{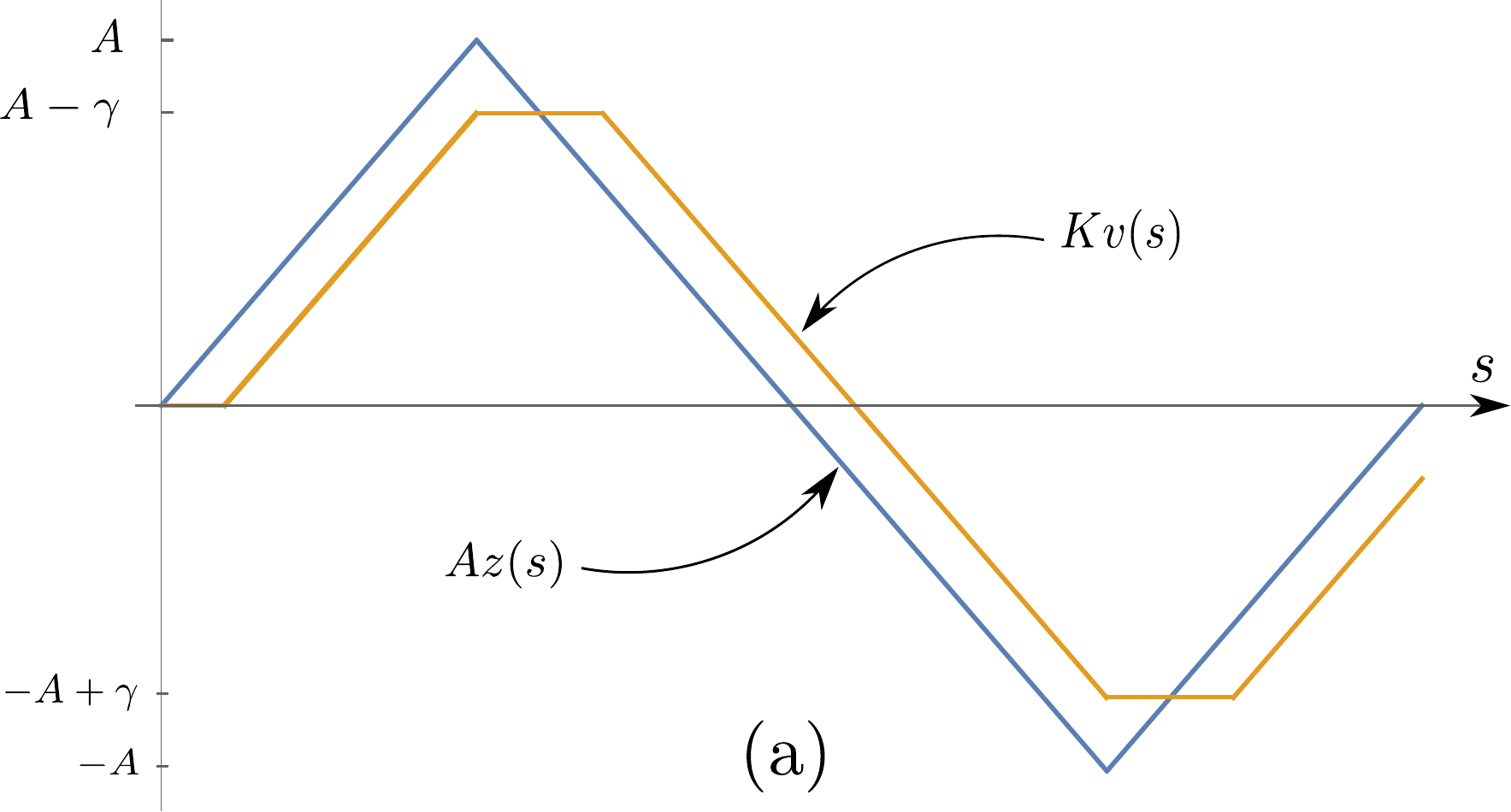}\quad\includegraphics[scale=0.45]{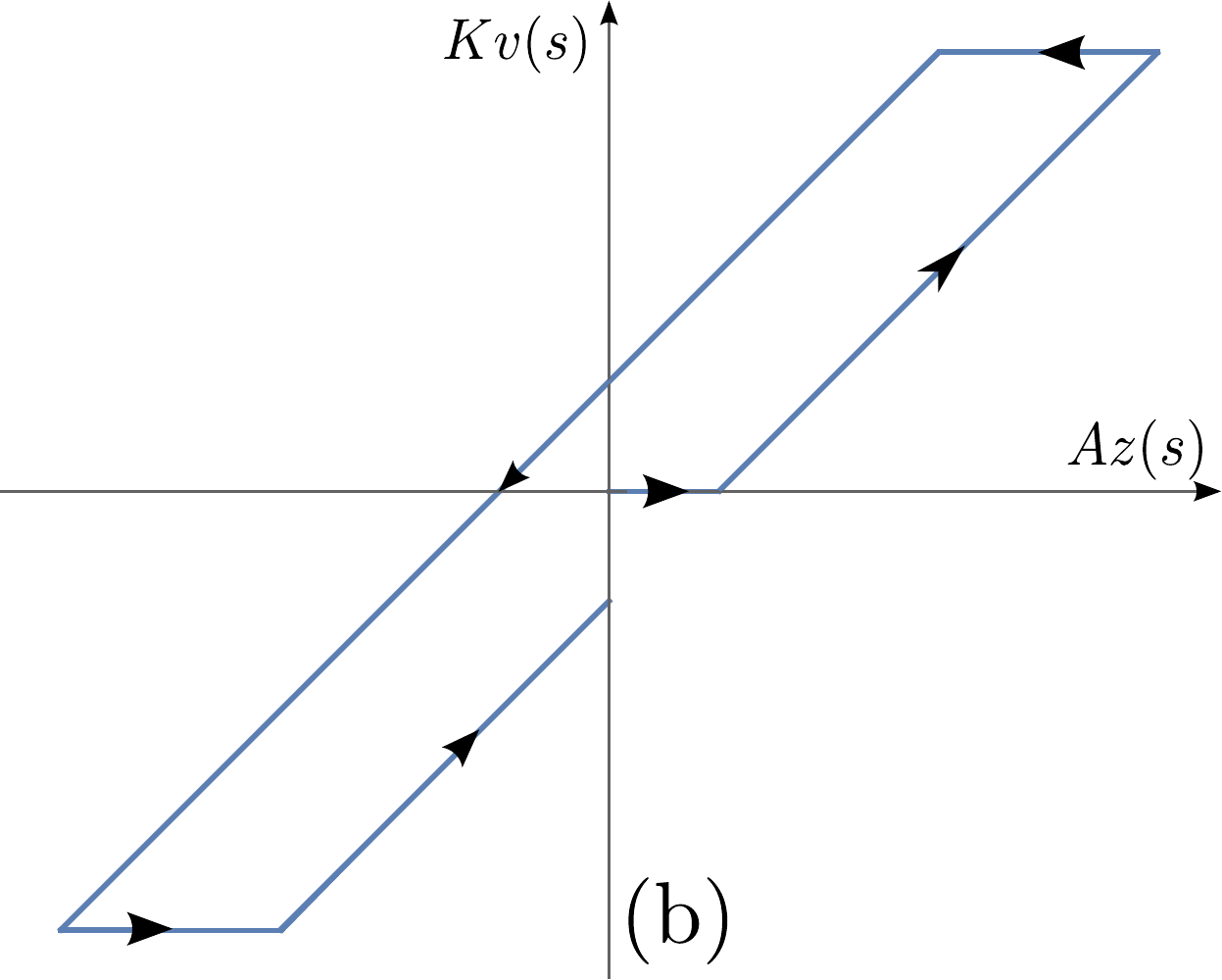}
\caption{(a) plots of $A z(s)$ and $K v(s)$; (b) parametric plot of $s\mapsto (A z(s),Kv(s))$. Here $s=t/\tau$ is the rescaled time.}
\end{figure}
The graphs of $Az(s)$ along with $K v(s)$ are shown in Fig. 2. In particular, hysteresis is apparent from Fig. 1.(b).
\bigskip

\textbf{Singular free energies.}
One may want to consider the following singular free energy suggested by the \emph{regular solution model} \cite{AtkinP2006}. In this model, the \emph{entropic contribution} 
\begin{equation}
  \psi_{\rm e}(\cc)=k r\log r,\qquad k>0,
\end{equation}
to the free energy is augmented by a quadratic \emph{enthalpic term} through an \emph{interaction parameter} $\chi>0$, so that the total free energy is:
\begin{equation}
  \label{eq:41}
  \psi({r})=\psi_{\rm e}({r})+\psi_{\rm e}(1-r)+\chi{r}(1-{r})\quad \textrm{if }r\in(0,1),
\end{equation}
with $\psi(r)=+\infty$ otherwise. This prototypical energy is, for instance, relevant when describing solvent diffusion in polymer gels (see, \emph{e.g.}, \cite{Doi2009JPSJ,Flory1942TJocp,lucantonio2013transient}).

Now, the issue with the singular terms in  \eqref{eq:41} is that, at variance with energies having polynomial growth,  $\psi_{\rm e}'(r)$ is not controlled by $\psi_{\rm e}(r)$. In fact, as $r\in (0,1)$ approaches $0$ or $1$, we have that $\psi_{\rm e}(r)$ tends to null, whereas $|\psi_{\rm e}'(r)|$ blows up, as shown in Fig. 2 below.
\begin{figure}[h]
\centering
\includegraphics[scale=2]{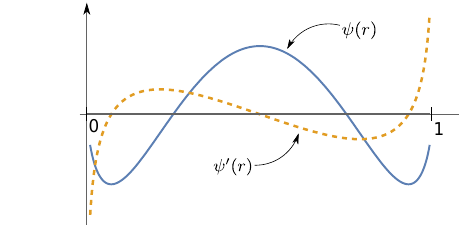}
\caption{Plots of the free energy in \eqref{eq:41} and of its derivative (respectively, solid and dashed line).}
\end{figure}
In this case, the argument leading from \eqref{eq:10} to the energetic estimates \eqref{eq:2211} breaks down, even when carried out formally. Indeed, the bound \eqref{eq:37}, which is essential to that argument, follows from (the discrete counterpart of) the equation
\begin{equation}\label{eq:94}
\mu=\beta\dot{\c}+\psi'(u),  
\end{equation}
whose exploitation is contingent on having a control on $\psi'(u)$. For energies with polynomial growth, such control is guaranteed from $\psi'(u)$ being bounded by $\psi(u)$, but this is not the case when $\psi(u)$ has the form \eqref{eq:41}.

The fact that a basic energetic estimate fails in this situation is the symptom of a structural feature that would be fatal to any na\"ive attempt to handle the problem by approximation of the free energy. Suppose in fact that, by mimicking \cite{EllioL1991MA}, we approximate $\psi_{\rm e}$ with the following regularized functions:
\begin{equation}
  \label{eq:58}
  \psi_{\rm e,\varepsilon}({r})=\begin{cases}
\psi_{\rm e}({r})\quad\textrm{if}\quad {r}\ge \varepsilon,
\\[0.5em]
\displaystyle kr\log\varepsilon+\frac k 2 \Big(\frac{r^2}{\varepsilon}-\varepsilon \Big)\quad\textrm{if}\quad{r}<\varepsilon,
\end{cases}
\end{equation}
and we consider a $\varepsilon$-parametrized sequence of problems with the following regularized free energy:
\begin{equation}\label{eq:33}
  \psi_\varepsilon({r})=\psi_{\rm e,\varepsilon}({r})+\psi_{\rm e,\varepsilon}(1-{r})+\chi {r}(1-{r}).
\end{equation}
The energies $\psi_\varepsilon$ specified by \eqref{eq:33} are smooth and have quadratic growth as $|r|\to+\infty$. Thus, Theorem \ref{thm:1} can be applied to establish the existence, for each particular $\varepsilon$, of an approximation
\begin{align}
  \label{eq:56}
&{\c}_\varepsilon\in L^\infty(0,T;H^1(\Omega))\cap H^1(0,T;L^2(\Omega)),
\\
&\mu_\varepsilon\in L^2(0,T;H^1(\Omega)),
\end{align}
that satisfies
\begin{subequations}\label{eq:96}
\begin{align}
&\int_\Omega \dot{\c}_\varepsilon(t) v+\alpha\nabla\mu_\varepsilon(t)\cdot\nabla v=\int_{\Gamma} hv\quad\forall v\in H^1(\Omega)\text{ for a.a. } t\in(0,T),\label{eq:59}\\
&\mu_\varepsilon=\alpha\dot{\c}_\varepsilon+\psi_\varepsilon'({\c}_\varepsilon)\quad \textrm{a.e. in } Q,\label{eq:522}\\
&{\c}_\varepsilon(0)={\c}_0 \quad \textrm{a.e. in }\Omega.\label{eq:522bis}
\end{align}
\end{subequations}
Since $\psi_\varepsilon$ approximates $\psi$ in the sense that
\begin{equation}\label{eq:65}
  \psi'_\varepsilon(r)\to\psi'(r)\quad\textrm{ as $\varepsilon\to 0$ for all }r\in(0,1),
\end{equation}
one may hope that, as $\varepsilon\to 0$, the solution of \eqref{eq:96} converge to a limit, and this limit satisfy the original problem. 

In order to establish such convergence, however, one should rely on estimates analogous to \eqref{eq:2211}, \eqref{eq:43}, and \eqref{eq:233}. The problem however is that the bounds in these estimates would not be uniform as $\varepsilon\to 0$, because  $\psi_\varepsilon(r)$ bounds $\psi'_\varepsilon(r)$ only through a multiplicative constant that blows up as $\varepsilon\to 0$. 

On the other hand, if the Neumann boundary condition \eqref{eq:93} were to be replaced by the Dirichlet condition:
\begin{equation}\label{eq:95}
 \mu=\mu_\flat\quad\textrm{on}\quad\Gamma,
\end{equation}
as done in \cite{BonetCT2015?}, then an energy balance similar to \eqref{eq:10} would hold. In this case, it would be possible to recover the estimates \eqref{eq:2211} on $\mu$ and $\dot{\c}$ without recourse to \eqref{eq:37}, using instead the control on $\mu$ provided by its trace. Eventually, one would be able to control $\psi'({\c})$ by comparison in \eqref{eq:94}. 

All in all, if the boundary condition \eqref{eq:95} is enforced, the estimates
\begin{align}
  &\|u_\varepsilon\|_{L^\infty(0,T;H^1(\Omega)}\le C,\label{eq:60}\\
  &\|\dot u_\varepsilon\|_{L^2(Q)}\le C,\\
  &\|\mu_\varepsilon\|_{L^2(0,T;H^1(\Omega))}\le C,\label{eq:61}
\end{align}
hold uniformly with respect to $\varepsilon$. Using these estimates, we can pass to the limit in \eqref{eq:522} by replicating the argument in \cite{EllioL1991MA}.\medskip

\textbf{Alternative frameworks.}
The point of view put forth in \cite{GurtiFA2010}, which we illustrate in Section 2, appears to be less general than \cite{FriedG1999JSP}, where a microforce balance:
\begin{equation}\label{eq:102}
  {\rm div}\bm\xi+\pi+\gamma=0
\end{equation}
augments the mass balance \eqref{eq:1} and chemical potential is included in the list of independent state variables. In particular, the \emph{internal microforce} $\pi$, rather than chemical potential, is the object of a constitutive prescription (see also the discussion in \cite{BonetCT2015?}). However, since we do not consider surface-tension effects, in our case the \emph{vectorial microstress} vanishes:
\begin{equation}
  \bm\xi=\mathbf 0,
\end{equation}
and, given that the \emph{external microforce} $\gamma$ vanishes in any actual evolution process:
\begin{equation}\label{eq:107}
  \gamma=0,
\end{equation}
the microscopic force balance \eqref{eq:102} reduces to the statement $\pi=0$, a statement that, when combined with the constitutive equation
\begin{equation}\label{eq:26}
  \pi=\mu-\widehat\psi'(u),
\end{equation}
yields the equation of state \eqref{eq:5}.

An alternative framework for  processes of phase diffusion, which hinges on the analogy between chemical potential and coldness as measures of \emph{orderliness}, has been proposed in \cite{Podio2006RM}. In that framework, the microforce balance \eqref{eq:102} is retained, but balance of microscopic energy and imbalance of microscopic entropy replace, respectively, mass balance and dissipation inequality. Consistent with this point of view, the microscopic entropy flux, which is involved in the microscopic entropy imbalance, is assumed to be proportional through $\mu$ to the  microscopic energy flux, which appears in the microscopic energy balance. With this assumption, the microscopic entropy imbalance can be combined with the microscopic balances of force and energy to arrive at the following counterpart of the conventional \emph{reduced dissipation inequality}:
\begin{equation}\label{eq:105}
\dot\psi\le -\eta(\mu^{-1})^{\displaystyle\huge\cdot}+\mu^{-1}\overline{\mathbf h}\cdot\nabla\mu-\pi\dot {\c}+\bm\xi\cdot\nabla\dot {\c},
\end{equation}
which involves, besides the aforementioned \emph{flux of microscopic energy} $\overline{\mathbf h}$, a \emph{microscopic free energy} $\psi$ and a \emph{microscopic entropy} $\eta$. In particular a collection of constitutive prescriptions are sorted out, leading to diffusion models \emph{\`a la Cahn-Hilliard} which have been studied in \cite{colli2011well,colli2013global}. These prescriptions include:
\begin{equation}\label{eq:103}
   \psi=\widetilde\psi(u,\nabla u,\mu),\qquad  \pi=-\partial_{\c}\widetilde\psi({\c},\nabla {\c},\mu),\qquad \bm\xi=\partial_{\nabla u}\widetilde\psi(u,\nabla u,\mu).
\end{equation}
In particular, if the constitutive mapping delivering the free energy density has the form:
\begin{equation}
  \widetilde\psi({\c},\nabla {\c},\mu)=-\mu {\c}+\widehat\psi(u)+\frac 12\kappa|\nabla u|^2,
\end{equation}
then the microforce balance \eqref{eq:102} with $\gamma=0$ leads to the equation of state \eqref{eq:101}. 

In this framework, the analogue of the viscous regularization \eqref{eq:53} would be: 
\begin{equation}\label{eq:104}
  \pi=-\partial_{\c}\widetilde\psi({\c},\nabla {\c},\mu)-\beta\dot{\c}.
\end{equation}
Moreover, the inclusion
\begin{equation}\label{eq:106}
   -\pi-\partial_{\c}\widetilde\psi({\c},\nabla{\c},\mu)\in\partial\widehat\zeta(\dot{\c}),
 \end{equation}
would take the place \eqref{eq:35}. The analytical consequences of both \eqref{eq:104} and \eqref{eq:106} are still to be explored.


\section*{Acknowledgements}
The author acknowledges the financial support of INdAM-GNFM through the initiative ``Progetto Giovani''. The author is grateful to Pierluigi Colli for his valuable 
feedback on an early draft of this paper. The author thanks also Eliot Fried, Amy Novick-Cohen, and Flavia Smarrazzo for providing relevant references.

\bibliographystyle{abbrv}
\bibliography{bibliography,bib2} 

\begin{thebibliography}{10}

\bibitem{AtkinP2006}
P.~Atkins and J.~de~Paula.
\newblock {\em Atkins' Physical Chemistry}.
\newblock W. H. Freeman and Company, 2006.

\bibitem{barenblatt1993degenerate}
G.~Barenblatt, M.~Bertsch, R.~D. Passo, and M.~Ughi.
\newblock A degenerate pseudoparabolic regularization of a nonlinear
  forward-backward heat equation arising in the theory of heat and mass
  exchange in stably stratified turbulent shear flow.
\newblock {\em SIAM J. Math. Anal.}, 24:1414--1439, 1993.

\bibitem{BertsPV2001AMPA}
M.~Bertsch, P.~Podio-Guidugli, and V.~Valente.
\newblock On the dynamics of deformable ferromagnets. {I}. {G}lobal weak
  solutions for soft ferromagnets at rest.
\newblock {\em Ann. Mat. Pur. Appl.}, 179:331--360, 2001.

\bibitem{bertsch2013pseudoparabolic}
M.~Bertsch, F.~Smarrazzo, and A.~Tesei.
\newblock Pseudoparabolic regularization of forward-backward parabolic
  equations: A logarithmic nonlinearity.
\newblock {\em Analysis \& PDE}, 6:1719--1754, 2013.

\bibitem{BonetCT2015?}
E.~Bonetti, P.~Colli, and G.~Tomassetti.
\newblock In preparation.
\newblock 2015.

\bibitem{CahnH1958JCP}
J.~W. Cahn and J.~E. Hilliard.
\newblock Free energy of a nonuniform system. {I}. {I}nterfacial free energy.
\newblock {\em J. Chem. Phys.}, 28:258--267, 1958.

\bibitem{ColemN1963ARMA}
B.~D. Coleman and W.~Noll.
\newblock {The thermodynamics of elastic materials with heat conduction and
  viscosity}.
\newblock {\em Arch. Ration. Mech. Anal.}, 13:167--178, 1963.

\bibitem{colli2011well}
P.~Colli, G.~Gilardi, P.~Podio-Guidugli, and J.~Sprekels.
\newblock Well-posedness and long-time behavior for a nonstandard viscous
  {C}ahn-{H}illiard system.
\newblock {\em SIAM J. Appl. Math.}, 71:1849--1870, 2011.

\bibitem{colli2013global}
P.~Colli, G.~Gilardi, P.~Podio-Guidugli, and J.~Sprekels.
\newblock Global existence and uniqueness for a singular/degenerate
  {C}ahn--{H}illiard system with viscosity.
\newblock {\em J. Diff. Eq.}, 254:4217--4244, 2013.

\bibitem{Doi2009JPSJ}
M.~Doi.
\newblock Gel dynamics.
\newblock {\em J. Phys. Soc. Jpn.}, 78:052001, 2009.

\bibitem{EllioL1991MA}
C.~Elliot and S.~Luckhaus.
\newblock Generalized diffusion equation for phase separation of a
  multi-component mixture with interfacial free energy.
\newblock {\em \rm Preprint 887, Institute for Mathematics and its
  Applications, Minneapolis}, 1991.

\bibitem{EllioS1996JDE}
C.~Elliott and A.~Stuart.
\newblock Viscous {C}ahn--{H}illiard {E}quation {II}. {A}nalysis.
\newblock {\em J. Diff. Eq.}, 128:387--414, 1996.

\bibitem{elliott1996cahn}
C.~M. Elliott and H.~Garcke.
\newblock On the {C}ahn-{H}illiard equation with degenerate mobility.
\newblock {\em SIAM J. Math. Anal.}, 27:404--423, 1996.

\bibitem{EvansP2004MMMAS}
L.~C. Evans and M.~Portilheiro.
\newblock Irreversibility and hysteresis for a forward-backward diffusion
  equation.
\newblock {\em Math. Mod. Meth. Appl. Sci.}, 14:1599--1620, 2004.

\bibitem{Flory1942TJocp}
P.~J. Flory.
\newblock Thermodynamics of high polymer solutions.
\newblock {\em J. Chem. Phys.}, 10:51, 1942.

\bibitem{FriedG1999JSP}
E.~Fried and M.~Gurtin.
\newblock Coherent solid-state phase transitions with atomic diffusion: A
  thermomechanical treatment.
\newblock {\em J. Stat. Phys.}, 95:1361--1427, 1999.

\bibitem{Gurti1996PD}
M.~E. Gurtin.
\newblock Generalized {G}inzburg-{L}andau and {C}ahn-{H}illiard equations based
  on a microforce balance.
\newblock {\em Phys. D}, 92:178--192, 1996.

\bibitem{GurtiFA2010}
M.~E. Gurtin, E.~Fried, and L.~Anand.
\newblock {\em The {M}echanics and {T}hermodynamics of {C}ontinua}.
\newblock Cambridge University Press, 2010.

\bibitem{Lions1969}
J.-L. Lions.
\newblock {\em {Quelques m\'{e}thodes de r\'{e}solution des probl\`{e}mes aux
  limites non lin\'{e}aires}}.
\newblock Dunod, 1969.

\bibitem{lucantonio2013transient}
A.~Lucantonio, P.~Nardinocchi, and L.~Teresi.
\newblock Transient analysis of swelling-induced large deformations in polymer
  gels.
\newblock {\em J. Mech. Phys. Solids}, 61:205--218, 2013.

\bibitem{miranville1999model}
A.~Miranville.
\newblock A model of {C}ahn--{H}illiard equation based on a microforce balance.
\newblock {\em Compt. Rend. Acad. Sci. - Ser. I-Math.}, 328:1247--1252, 1999.

\bibitem{miranville2000some}
A.~Miranville.
\newblock Some generalizations of the {C}ahn--{H}illiard equation.
\newblock {\em Asympt. Anal.}, 22:235--259, 2000.

\bibitem{miranville1998dynamical}
A.~Miranville, A.~Pietrus, and J.-M. Rakotoson.
\newblock Dynamical aspect of a generalized {C}ahn--{H}illiard equation based
  on a microforce balance.
\newblock {\em Asympt. Anal.}, 16:315--345, 1998.

\bibitem{Novic1988viscous}
A.~Novick-Cohen.
\newblock On the viscous {C}ahn-{H}illiard equation.
\newblock In {\em Material instabilities in continuum mechanics ({E}dinburgh,
  1985--1986)}, pages 329--342. Oxford University Press, 1988.

\bibitem{NovicP1991TAMS}
A.~Novick-Cohen and R.~L. Pego.
\newblock Stable patterns in a viscous diffusion equation.
\newblock {\em Trans. Amer. Math. Soc.}, 324:331--351, 1991.

\bibitem{Plotn1994Passing}
P.~I. Plotnikov.
\newblock Passing to the limit with respect to viscosity in an equation with
  variable parabolicity direction.
\newblock {\em Diff. Eq.}, 30:614--622, 1994.

\bibitem{Podio2006RM}
P.~Podio-Guidugli.
\newblock Models of phase segregation and diffusion of atomic species on a
  lattice.
\newblock {\em Ric. Mat.}, 55:105--118, 2006.

\bibitem{porzio2013radon}
M.~M. Porzio, F.~Smarrazzo, and A.~Tesei.
\newblock Radon measure-valued solutions for a class of quasilinear parabolic
  equations.
\newblock {\em Arch. Rat. Mech. Anal.}, 210:713--772, 2013.

\bibitem{Roubi2013Nonlinear}
T.~Roub{\'\i}{\v{c}}ek.
\newblock {\em Nonlinear partial differential equations with applications.}
\newblock Birkh\"auser, second edition, 2013.

\bibitem{Simon1987AMPA4}
J.~Simon.
\newblock Compact sets in the space {$L^p(0,T;B)$}.
\newblock {\em Ann. Mat. Pura Appl.}, 146:65--96, 1987.

\bibitem{thanh2014passage}
B.~L.~T. Thanh, F.~Smarrazzo, and A.~Tesei.
\newblock Passage to the limit over small parameters in the viscous
  {C}ahn--{H}illiard equation.
\newblock {\em J. Math. Anal. Appl.}, 420:1265--1300, 2014.

\bibitem{thanh2014sobolev}
B.~L.~T. Thanh, F.~Smarrazzo, and A.~Tesei.
\newblock Sobolev regularization of a class of forward--backward parabolic
  equations.
\newblock {\em J. Diff. Eq.}, 257:1403--1456, 2014.

\bibitem{victor1998sobolev}
P.~Victor.
\newblock Sobolev regularization of a nonlinear ill-posed parabolic problem as
  a model for aggregating populations.
\newblock {\em Comm. Part. Diff. Eq.}, 23:457--486, 1998.

\bibitem{Visin1994}
A.~Visintin.
\newblock {\em Differential models of hysteresis}.
\newblock Springer Berlin, 1994.

\bibitem{visintin2002forward}
A.~Visintin.
\newblock Forward--backward parabolic equations and hysteresis.
\newblock {\em Calc. Var. Partial Diff. Eq.}, 15:115--132, 2002.

\end{thebibliography}

\end{document}